\theoremstyle{plain}
\newtheorem{theorem}{Theorem}
\newtheorem{corollary}{Corollary}
\theoremstyle{definition}
\newtheorem{definition}{Definition}
\newtheorem{example}{Example}
\newtheorem{algorithm}{Algorithm}
\renewenvironment{proof}[1][\proofname]{\par
  \pushQED{\qed}%
  \normalfont \topsep6\p@\@plus6\p@\relax
  \trivlist
  \item\relax
  {\bfseries
  #1\@addpunct{.}}\hspace\labelsep\ignorespaces
}{%
  \popQED\endtrivlist\@endpefalse
}
\definecolor{darkred}{RGB}{171, 14, 12}
\definecolor{main}{RGB}{0, 0, 0}
\title{Matching with regional constraints:
An equivalence\footnote{We express sincere gratitude to my advisor Fuhito Kojima for encouraging guidance, thoughtful support and detailed suggestions. We are especially grateful for Michihiro Kandori for posing this question. We are grateful for In-Koo Cho, Vijay Krishna, Arunava Sen, Antonio Penta for extensive comments and suggestions. We thank Yuichiro Kamada, Akira Matsushita for helpful discussions. We thank Ayano Yago and conference audiences at the Summer School of Econometric Society in Abu Dhabi (2024) for helpful comments.} }
\author{Elizabeth Nanami Aoi\footnote{University of Tokyo, email: enanamiaoi@g.ecc.u-tokyo.ac.jp}}
\date{\today }
\begin{document}
\maketitle 

\providecommand{\keywords}[1]
{
  \small	
  \textbf{\textit{Keywords---}} #1
}

\providecommand{\jelcodes}[1]
{
  \small	
  \textbf{\textit{JEL---}} #1
}

\begin{abstract}
In two-sided matching market, when the regional constraints are present, the deferred acceptance (DA) algorithm suffers from undesirable inefficiency due to the artificial allocation of the regional caps among hospitals. We show that, given preferences, there exist allocations that guarantee the efficiency of the DA algorithm.  Furthermore, it is equivalent to the FDA algorithm developed by \cite{kamada2015efficient}, which endows the latter with an interpretation as a tool for endogenous capacity design. Our proof applies the optimality within the matching with contracts (\cite{hatfield2005matching}) framework, offering a broadly applicable method for establishing equivalence among DA-based mechanisms.

\end{abstract}

\keywords{FDA algorithm, DA algorithm, optimality, matching with contracts, matching with constraints, distributional constraints.}

\jelcodes{C70, D61.}

\section{Introduction}

One of the crucial goals of matching theory is to find ``desirable'' matchings. In a two-sided matching market, the deferred acceptance (DA) algorithm has been studied and applied for its unique properties, with real-life contexts such as school choice (\cite{abdulkadirouglu2003school}), medical residency matching (\cite{roth1984evolution}, \cite{roth1999redesign}), and others. However, practical constraints often prevent the direct application of the DA algorithm, giving rise to the field known as ``matching with distributional constraints.''

For example, a central concern in medical residency matching is the regional disparity of doctors, which urges the government to limit placements in urban areas to support rural regions. Formally, each region has a regional cap, an upper bound on the number of doctors it can accept. \cite{kamada2015efficient} formalized this setting as ``matching with regional constraints''. Although the DA algorithm can be applied in such settings, it requires artificially tightening constraints. Because the DA algorithm operates in a standard two-sided market, the regional cap of each region must first be \textit{allocated} among hospitals that are located in that region to satisfy both regional caps and capacities of the hospitals. However, this artificial allocation of the regional caps sometimes causes problems. 

To illustrate this point, consider a region with only two hospitals: one highly preferred (`good') and one less preferred (`bad'). The regional cap and the capacity of each hospital are one. A doctor prefers the good hospital, but would rather remain unmatched than go to the bad one. If the regional cap is allocated to the bad hospital, the DA algorithm yields an empty match, wasting the regional cap. Instead, allocating to the good hospital results in an efficient match.

The above example highlights the remaining challenge in applying the DA algorithm: \textit{how to allocate regional caps to fully realize its potential}. While previous research did not directly address this allocation problem, \cite{kamada2015efficient} conducted a thorough investigation on matching with regional constraints and proposed the flexible deferred acceptance (FDA) algorithm, a new mechanism that is strategy-proof and produces a (constrained) efficient, weakly stable matching. 

This paper focuses on the problem of allocation of the regional caps and elucidates the equivalence between the FDA algorithm and the DA algorithm. Specifically, if regional caps are allocated to mimic the distribution produced by the FDA algorithm, then the DA algorithm yields exactly the same (constrained) efficient matching. This result gives a new interpretation to the FDA algorithm as a tool of endogenous capacity design for the DA algorithm. Our simple proof technique, which exploits (proposing-side) optimality, a celebrated property of the DA algorithm, is readily applicable to other cases in establishing the equivalence of the DA-based mechanisms. More precisely, taking advantage of the matching with contracts (\cite{hatfield2005matching}) framework as \cite{kamada2018stability} did in their proof, we relate the two algorithms to the generalized DA algorithm in two different markets and show that both algorithms yield stable matchings in both markets. Then, optimality in each market prompts equivalence.
Rationalization of choice functions makes the argument concise. 
Our paper contributes to the literature on matching with distributional constraints by offering a new perspective on the implementation of (constrained) efficient outcomes using familiar algorithms.

As the standard two-sided matching theory has matured, growing practical needs in applications have driven research toward environments with distributional constraints. Notable examples include diversity constraints in school choice\footnote{Such area of study is sometimes called as ``controlled school choice''.} such as maximum quotas (\cite{abdulkadirouglu2005college}; \cite{kojima2012school}), reserves (\cite{hafalir2013effective}), minimum quotas (\cite{fragiadakis2017improving}; \cite{tomoeda2018finding}), soft bounds (\cite{ehlers2014school}) and affirmative action policies in India (\cite{aygun2020dynamic}; \cite{sonmez2022affirmative}), as well as regional constraints in medical residency matching with floor constraints (\cite{akin2021matching}) and with regional caps (\cite{kamada2015efficient}; \cite{kamada2017stability}; \cite{goto2014improving}). \cite{kamada2018stability} further generalized the regional caps to hierarchical structures and proposed an extension of the FDA algorithm to ensure efficiency, a setting to which our results also directly apply.\footnote{See Section 4.} 

To analyze priority and constraints at the same, some papers take an approach to utilize choice rules (\cite{echenique2015control}: \cite{imamura2020meritocracy}; \cite{dougan2023choice}). In fact, in studying more general constraint frameworks, it is common to aggregate preferences of hospitals and distributional constraints into a single choice function, enabling analysis via the matching with contracts (\cite{hatfield2005matching}) framework. \cite{hafalir2022design} utilized utility rationalization of a choice function to pursue distributional objectives, focusing on lexicographic choice functions. \cite{kojima2018designing} proposed a unifying framework that subsumed many of the above specific models. Our results may extend to such settings when the hospital side prioritizes regional preferences over the preferences of hospitals.

The rest of this paper is organized as follows: Section 2 formally states the model of matching with regional constraints and describes the DA algorithm and the FDA algorithm. Section 3 presents our results and proofs, and Section 4 provides discussion and possible generalizations.

\section{Matching with regional constraints}

\subsection{Model}

There is a finite set of doctors $D$ and a finite set of hospitals $H$. Each doctor $d \in D$ has a strict preference relation $\succ_d$ over the set of hospitals and the state of being unmatched $\phi$. For any $h', h'' \in H \cup \{\phi\}$, we write $h' \succeq_d h''$ if and only if $h' \succ_d h''$ or $h' = h''$. Each hospital $h \in H$ has a strict preference relation $\succ_h$ over the set of subsets of doctors. For any $D', D'' \subseteq D\cup \{\phi\}$, we write $D' \succeq_h D''$ if and only if $D' \succ_h D''$ or $D' = D''$. We denote the preference profile of all doctors and hospitals as $\succ \equiv (\succ_i)_{i \in D \cup H}$. We say that the doctor $d$ is \textbf{acceptable} to hospital $h$ if $d \succ_h \phi$. Conversely, $h$ is acceptable to $d$ if $h \succ_d \phi$.

Each hospital $h \in H$ has a (physical) \textbf{capacity} $q_h$, which is a positive integer. We assume that the preference of each hospital $\succ_h$ is \textbf{responsive with capacity} $q_h$ (\cite{roth1985college}), 
\begin{enumerate}
    \item[(i)] For any $D' \subseteq D$ with $|D'| \leq q_h$, $d \in D\backslash D'$ and $d' \in D'$, $(D' \cup d) \backslash d' \succeq_h D'$ if and only if $d \succeq_h d'$,
    \item[(ii)] For any $D' \subseteq D$ with $|D'| \leq q_h$ and $d' \in D'$, $D'\succeq_h D' \backslash d'$ if and only if $d' \succeq_h \phi$, and
    \item[(iii)] $\phi \succ_h D'$ for any $D' \subseteq D$ with $|D'| > q_h$.
\end{enumerate}

There is a finite set of \textbf{regions} $R$, and the set of hospitals is partitioned into regions. That is, denoting the set of hospitals in region $r\in R$ as $H_r$, we have $H_r \cap H_{r'} = \phi$ for all pairs of two different regions $r, r' \in R$ and $\cup_{r \in R} H_r = H$. Each region $r \in R$ is endowed with a \textbf{regional cap} $q_r \in \mathbb{Z}_+$. For each $h \in H$, let $r(h)$ denote the region in which hospital $h$ resides, that is, $h \in H_{r(h)}$.

A \textbf{matching} $\mu$ is an assignment of doctors to hospitals. Formally, it is a mapping defined on $D \cup H \cup \{\phi\}$ that satisfies (i) $\mu_d \in H \cup \{\phi\}$ for all $d \in D$, (ii) $\mu_h \in 2^D \cup \{\phi\}$ for all $h \in H$ and (iii) for any $ d \in D$ and for any $h \in H$, $\mu_d = h$ if and only if $d \in \mu_h$.

A matching $\mu$ is \textbf{feasible} if (i)
$|\mu_r| \leq q_r$ for all $r \in R$, where $\mu_r \equiv \cup_{h \in r} \mu_h$  and (ii) $|\mu_h| \leq q_h$ for all $h \in H$. This definition requires compliance with the regional cap for every region as well as the capacities of the hospitals. This model allows for a situation where $q_r < \sum_{h \in H} q_h$, meaning that the regional cap can be smaller than the sum of hospital capacities in that region. 

The efficiency in this model is a constrained notion due to the regional caps. A feasible matching $\mu$ is \textbf{(constrained) efficient} if there exists no other feasible matching $\mu'$ such that $\mu'_i \succeq_i \mu_i$ for all $i \in D \cup H$ and $\mu'_i \succ_i \mu_i$ for some $i \in D \cup H$.

A matching is \textbf{individually rational} if no doctor or hospital is matched with an unacceptable agent. 

Given a matching $\mu$, a pair of a doctor and a hospital $(d, h) $ is a \textbf{blocking pair} if $h \succ_d \mu_d$ and either (i) $|\mu_h| < q_h$ and $d \succ_h \phi$, or (ii) $d \succ_h d'$ for some $d' \in \mu_h$. It is a pair who is eager to match each other on their own rather than following the proposed matching $\mu$. 

In a standard two-sided matching market without regional constraints, a matching $\mu$ is said to be \textbf{stable} if it is individually rational and it has no blocking pair. However, in the presence of regional constraints, stability is not always compatible. Hence, \cite{kamada2017stability} proposed an appropriate notion of fairness called weak stability, whose existence is always guaranteed.\footnote{In fact, \cite{kamada2017stability} investigated a more natural notion of stability in this setting called strong stability. However, they showed that strong stable matching does not always exists.}  A matching $\mu$ is said to be \textbf{weakly stable} if it is feasible, individually rational, and if $(d, h)$ is a blocking pair then (i) $\sum_{h' \in r(h)} |\mu_{h'}| = q_{r(h)}$, and (ii) $d' \succ_h d$ for all $d' \in \mu_h$. In words, the weak stability accommodates some blocking pair $(d, h) $, as long as the regional cap is binding for the region $h$ is located in, and all doctors matched with $h$ are strictly preferred to $d$.

A \textbf{mechanism} $\phi$ is a function that maps preference profiles to matchings.  We denote the matching produced by mechanism $\phi$ given preference profile $\succ$ as $\phi(\succ)$, and agent $i$'s match as $\phi_i(\succ)$ for each $i \in D\cup H$. A mechanism $\phi$ is said to be \textbf{strategy-proof for doctors} when no doctor has an incentive to report an untrue preference given the mechanism. Formally, $\phi$ is strategy-proof for doctors if there exists no preference profile $\succ$, a doctor $d \in D$, and some preference $\succ'_d$ of doctor $d$ such that 
\begin{align*}
    \phi_d(\succ'_d, \succ_{-d}) \succ_d \phi_d(\succ).
\end{align*}

\subsection{The DA algorithm and the allocation of the regional caps}

In this section, we describe the motivating real-life example of the matching with regional constraint, the medical residency matching in Japan. They use a mechanism called the Japanese Residency Matching Program (JRMP) mechanism, which is the deferred acceptance (DA) algorithm with an adapted capacities. The DA algorithm, which was introduced by \cite{gale1962college}, is defined in a standard matching market without regional caps. In our model, this corresponds to the case in which the regional caps cannot bind for all regions, that is, for each $r \in R$, we have $q_r > |D|$.

\begin{algorithm}
The \textbf{DA algorithm} proceeds in the following steps.\footnote{To be precise, this is called the doctor-proposing DA algorithm.}

\textbf{Step 1.} 
Each doctor applies to the hospital of her first choice. Given the applications, each hospital rejects the the lowest-ranking doctors in excess of its capacity and all unacceptable doctors, while temporarily keeps the rest. 

In general,

\textbf{Step k.} 
Each doctor who got rejected in the previous step applies to her next highest choice (if any). Each hospital combines these newly applied doctors and those who are temporarily kept in the previous step, and rejects the lowest-ranking doctors in excess of its capacity and all unacceptable doctors, while temporarily keeps the rest.

The algorithm terminates at a step in which no application occurs. It is well-known that this algorithm terminates in the finite steps.
\end{algorithm}

The DA algorithm is known to be strategy-proof for doctors, and always produces an efficient, stable matching. Furthermore, Hatfield and Milgrom (2005) showed that the DA algorithm produces the (proposing-side) optimal stable matching. A stable matching $\mu$ is said to be (proposing-side) \textbf{optimal} if there exists no stable matching $\mu'$ such that $\mu'_d \succeq_d \mu_d$ for all $d \in D$ and $\mu'_{d'} \succ_{d'} \mu_{d'}$ for at least some $d' \in D$. 

The JRMP mechanism was introduced in the hope of taking advantage of these many desiderata of the DA algorithm: In addition to hospital capacities and regional caps, the government exogenously imposes an artificial capacity for each hospital $h$, or \textbf{target capacity} $\bar{q}_h$, which is a positive integer, to satisfy the regional caps, i.e., $\sum_{h \in r} \bar{q}_h \leq q_r$ for each region $r \in R$. The JRMP mechanism is the DA algorithm in which the capacity of each hospital $h$ is replaced by the target capacity $\bar{q}_h$, instead of the (physical) capacity $q_h$. Simply put, \textit{they ``allocate'' the reginal caps among the hospitals in each region before applying the DA algorithm.}

However, this allocation of the regional caps often fails -- the JRMP mechanism sometimes lacks some desiderata the DA algorithm possesses, such as (constrained) efficiency. In fact, this comes from the failure in allocating the regional caps in the JRMP mechanism. Since it is the regional caps that the government wishes to be satisfied, not the target capacities, the allocation of the regional caps might pose an unnecessarily demanding constraint. The following example highlights the possible inefficiency of arbitrary allocation.

\begin{example}[Failure of the allocation of regional caps]
    Suppose there is one region $r$ with regional cap $q_r = 4$. In region $r$, there are three hospitals $h_1$, $h_2$ and $h_3$, with (physical) capacities $q_{h_1} = q_{h_2} = q_{h_3} =  2$. There are five doctors $d_1$, $d_2$, $d_3$, $d_4$ and $d_5$. The preference profile $\succ$ is as follows:
    \begin{align*}
        &\succ_{h_i}  : d_1 , d_2 , d_3, d_4, d_5, \phi \text{ for } i = 1, 2, 3,\\
        &\succ_{d_i}  : h_1, h_2, \phi \text{ for } i = 1, 2, 3, \\
        &\succ_{d_4}  : h_2, \phi,\\
        &\succ_{d_5} : h_2, h_3, \phi.
    \end{align*}
    
    Suppose that the government allocates the regional cap of $r$ among three hospitals as $(\bar{q}_{h_1}, \bar{q}_{h_2}, \bar{q}_{h_3}) = (1, 1, 2)$. Then, the DA algorithm produces the following matching
    \begin{align*}
        \mu = 
        \begin{pmatrix}
             h_1 & h_2 & h_3 & \phi \\
             d_1 & d_2 & d_5 & d_3, d_4\\
    \end{pmatrix},
    \end{align*}
    which means $d_1$ matches with $h_1$, $d_2$ matches with $h_2$, $d_5$ matches to $h_3$, and $d_4$ and $d_5$ remain unmatched. We use this matrix form to specify a matching in the following as well.
    It is obvious that $\mu$ is inefficient since it is Pareto dominated by the following matching $\mu'$, which is (constrained) efficient.
    \begin{align*}
        \mu' = 
        \begin{pmatrix}
             h_1 & h_2 & h_3 & \phi\\
             d_1 & d_2, d_3 & d_5 &  d_4  \\
    \end{pmatrix}.
    \end{align*} 
    Notice that, if the government instead adopts the allocation $(\bar{q}'_{h_1}, \bar{q}'_{h_2}, \bar{q}'_{h_3}) = (1, 2, 1)$, then the DA algorithm produces $\mu'$. 
\end{example}

We say that the allocation of the regional caps is efficient if the DA algorithm under that allocation produces a (constrained) efficient matching. In Example 1, $(\bar{q}'_{h_1}, \bar{q}'_{h_2}, \bar{q}'_{h_3}) = (1, 2, 1)$ is an efficient allocation, while $(\bar{q}_{h_1}, \bar{q}_{h_2}, \bar{q}_{h_3}) = (1, 1, 2)$ is not. This highlights that how to design the allocation of the regional caps is the primary determinant of the performance of the DA algorithm. One question will be whether efficient allocation is guaranteed to exist, to which we provide a positive answer in Section 3.

\subsection{The FDA algorithm}

In the previous section, we emphasize the importance of the desirable allocation of the regional caps. In this section, we see how previous literature addressed inefficiency of the existing mechanism. \cite{kamada2015efficient} tackled this problem by directly giving matchings that surmount the deficiencies through a new mechanism called the flexible deferred acceptance (FDA) algorithm. They showed that the FDA algorithm returns a (constrained) efficient matching while making no doctors worse off.

\begin{algorithm}
    In the \textbf{FDA algorithm} (Kamada and Kojima 2015), first, we specify an \textbf{order} among all hospitals in $H_r$, where $H_r$ denote the set of hospitals in region $r\in R$. That is, $H_r\equiv \{h_1, h_2, \ldots, h_{|H_r|}\}$, where ther order is specified to be $h_1, h_2, \ldots, h_{|H_r|}$. Beginning with an empty matching, $\mu_d = \phi$ for all $d \in D$, the algorithm proceeds as follows.
    
    At each \textbf{Step k}, pick one doctor $d$ who is not tentatively kept by any hospital. Let $d$ apply to $\bar{h}$, a hospital of her first choice among those who have not rejected her yet (if any). Let $r \in R$ denote the region $\bar{h}$ is located.

    \begin{itemize}
        \item[(i)] \textit{Filling the target capacities.} For each hospital $h \in H_r$, among all doctors who have applied to and not been rejected from $h$, tentatively keep from the highest-ranking doctors up to its target capacity. 
        \item[(ii)] \textit{Filling the regional cap respecting the order.} Start with the tentative match in (i). Hospitals in region $r$ take turns in the pre-specified order (starting from $h_1$, followed by $h_2$, $\ldots$, $h_{|H_r|}$, and go back to $h_1$ again) in additionally keeping the best remaining doctors until the regional cap $q_r$ is reached or its (phisical) capacity $q_h$ is filled or no doctor remains in the application pool. Reject the rest.
    \end{itemize}
    If there exists no further application, the algorithm terminates. 
\end{algorithm}

Notice that the FDA algorithm is not unique since its definition depends on the orders of hospitals in each region. However, regardless of the order, \textit{the FDA algorithm always returns efficient matchings}. The following example illustrates this.

\begin{example}[Efficiency of the FDA algorithm: \cite{kamada2015efficient}]
    Consider the setting in Example 1, and assume that the government allocates the regional caps of $r$ as $(\bar{q}_{h_1}, \bar{q}_{h_2}, \bar{q}_{h_3}) =(1, 1, 2)$. This time, we run the FDA algorithm. Remember that the physical capacities are $(q_{h_1}, q_{h_2}, q_{h_3}) = (2, 2, 2)$. 
    
    In the FDA algorithm, we first need to specify the order among hospitals in each region. We have $3!$ different orders, but given the preference profile of doctors, the relevant order is the order among $h_1$ and $h_2$. (This is because $h_3$ can receive at most one application (from $d_5$), which is smaller than $\bar{q}_{h_3} = 2$.) So, all the possible orders fall into two: (A) $h_1, h_2$ or (B) $h_2, h_1$.\footnote{(A) includes $h_1 , h_2, h_3$ or $h_1, h_3, h_2$ or $h_3, h_1 , h_2$, and (B) includes $h_2 , h_1, h_3$ or $h_2, h_3, h_1$ or $h_3, h_2 , h_1$} Suppose the FDA algorithm picks the order (A) $h_1, h_2$. Then it proceeds as follows. 
    
    \textbf{Step 1. }First, the doctors apply to their most favorite hospitals; $d_1$, $d_2$ and $d_3$ apply to $h_1$, and $d_4$, $d_5$ apply to $h_2$. 
    \begin{itemize}
        \item[(i)] $h_1$ and $h_2$ tentatively keep up to their target capacities $\bar{q}_{h_1} = \bar{q}_{h_2} = 1$. $h_1$ keeps $d_1$ and $h_2$ keeps $d_4$. The regional cap has two additional seats.
        \item[(ii)] Hospitals take turns in keeping additional doctors following the order $h_1, h_2$. $h_1$ keeps $d_2$, and $h_2$ keeps $d_5$. The remaining doctor $d_3$ gets rejected.
    \end{itemize}

    \textbf{Step 2.} $d_3$ applies to her next best hospital $h_2$.
    \begin{itemize}
        \item[(i)] $h_1$ and $h_2$ tentatively keep up to their target capacities. $h_1$ keeps $d_1$, and $h_2$ keeps $d_3$. The regional cap has two additional seat.
        \item[(ii)] Hospitals take turns in keeping additional doctors following the order $h_1, h_2$. $h_1$ keeps $d_2$, and $h_2$ keeps $d_4$. The remaining doctor $d_5$ gets rejected.
    \end{itemize}

    \textbf{Step 3. } $d_5$ applies to her next best hospital $h_3$. 
    \begin{itemize}
        \item[(i)] All three hospitals tentatively keep up to their capacities. $h_1$ keeps $d_1$, $h_2$ keeps $d_3$, and $h_3$ keeps $d_5$. The regional cap has one additional seat.
        \item[(ii)] Hospitals take turns in keeping additional doctors following the order $h_1, h_2$. $h_1$ keeps $d_2$. The remaining doctor $d_4$ gets rejected. There is no further application, the FDA algorithm terminates. 
    \end{itemize}
    
   We obtain the following matching $\mu''$, which is (constrained) efficient.
    \begin{align*}
        \mu'' = 
        \begin{pmatrix}
             h_1 & h_2 & h_3 & \phi \\
             d_1, d_2 & d_3 & d_5 & d_4  \\
        \end{pmatrix}.
    \end{align*} 

    If the FDA algorithm works with the order (B) $h_2, h_1$ instead, it returns $\mu'$ in Example 1, which is also (constrained) efficient. Regardless of the orders, the FDA mechanism returns (constrained) efficient matchings. 
\end{example}

The (constrained) efficiency of the FDA algorithm overcomes the deficiencies of the existing mechanism. However, the connection between the FDA algorithm and the allocation of the regional caps has yet to be investigated. Our main result, provided in the next section, gives a clear answer to this problem, entangling the relationships between the FDA algorithm and the DA algorithm. We prove that, not only does there exist an efficient allocation, it coincides with the distribution of doctors under some FDA algorithm. Furthermore, \textit{the output of the DA algorithm under this allocation of the regional caps is exactly the same as the outcome of the corresponding FDA algorithm.} We conclude this section with an illustrative example.

\begin{example}[The equivalence of the DA algorithm and the FDA algorithm]
    Let us refer to Example 1 again. Now, start with the allocation $(\bar{q}'_{h_1}, \bar{q}'_{h_2}, \bar{q}'_{h_3}) = (1, 2, 1)$, which mimics the distribution of $\mu'$, a matching produced by the FDA algorithm with order (A) $h_2, h_1$.
    
    The DA algorithm proceeds as follows. First, $d_1$, $d_2$ and $d_3$ apply to $h_1$ and $d_4$, $d_5$ apply to $h_2$. Then, $d_2$, $d_3$ get rejected, and apply to their second best hospitals, $h_2$. Since $\bar{q}'_{h_2} = 2$, $h_2$ keeps $d_2$ and $d_3$, and rejects $d_4$ and $d_5$. Finally, $d_5$ applies to her second best hospital $h_3$ and gets tentatively accepted. No further application occurs, so the algorithm terminates.
    The resulting matching is $\mu'$, which is the exact matching produced by the FDA algorithm with order (A) $h_2, h_1$.
\end{example}

\subsection{The efficient allocation and the equivalence}

Our first theorem states the existence of efficient allocations of the regional caps that (weakly) improve the welfare of doctors from the matching produced by the existing mechanism.

\begin{theorem}
    Given some allocation of the regional caps, suppose that the DA algorithm produces a weakly stable matching that is not (constrained) efficient. Then there exist adapted allocations under which the DA algorithm produces a (constrained) efficient weakly stable matching and makes all doctors weakly better off.
\end{theorem}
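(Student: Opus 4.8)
The plan is to let the FDA algorithm supply the allocation and then reduce the theorem to a self-contained equivalence between the DA and FDA outcomes. Let $\bar{q}$ be the given allocation and $\mu$ the weakly stable, inefficient matching it induces under the DA algorithm. Fix any order on the hospitals within each region and run the FDA algorithm with the same target capacities $\bar{q}$, obtaining $\mu^{F}$. By \cite{kamada2015efficient}, $\mu^{F}$ is (constrained) efficient, weakly stable, and makes no doctor worse off than $\mu$, i.e.\ $\mu^{F}_d \succeq_d \mu_d$ for every $d \in D$. I would then define the adapted allocation by $\bar{q}'_h := |\mu^{F}_h|$ for each $h \in H$; since $\sum_{h \in H_r} \bar{q}'_h = |\mu^{F}_r| \le q_r$ for every region $r$, this is a legitimate allocation, and it is exactly the mimicking allocation illustrated in the equivalence example above. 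It therefore suffices to prove the equivalence claim that the DA algorithm run with target capacities $\bar{q}'$ outputs precisely $\mu^{F}$: the three conclusions of the theorem then follow immediately from the properties of $\mu^{F}$ inherited from \cite{kamada2015efficient}.

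For the equivalence I would follow the two-market, optimality-based route flagged in the introduction, casting both algorithms in the matching-with-contracts framework of \cite{hatfield2005matching} as in \cite{kamada2018stability}. Consider two markets over the common set of doctor--hospital contracts. In Market A the agents on the other side are the individual hospitals, each $h$ endowed with its responsive preference truncated at capacity $\bar{q}'_h$; doctor-proposing DA here returns the doctor-optimal stable matching $\mu^{A}$. In Market B the agents are the regions, each $r$ endowed with a choice function $C_r$ that rationalizes the two-phase FDA acceptance rule -- first keep every hospital up to its target capacity, then fill the regional cap $q_r$ by having the hospitals take turns in the fixed order -- so that the generalized DA algorithm in Market B returns $\mu^{F}$ as its doctor-optimal stable matching. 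The argument then has four moves: (a) $\mu^{F}$ is stable in Market A; (b) $\mu^{A}$ is stable in Market B; (c) doctor-optimality of $\mu^{A}$ in Market A together with A-stability of $\mu^{F}$ give $\mu^{A}_d \succeq_d \mu^{F}_d$ for all $d$; and (d) doctor-optimality of $\mu^{F}$ in Market B together with B-stability of $\mu^{A}$ give $\mu^{F}_d \succeq_d \mu^{A}_d$ for all $d$. Since preferences are strict, (c) and (d) force $\mu^{A} = \mu^{F}$.

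Step (a) is immediate: because $\bar{q}'_h = |\mu^{F}_h|$, every hospital is exactly full under $\mu^{F}$, so a blocking pair $(d,h)$ in Market A must be of the displacement type, with $h \succ_d \mu^{F}_d$ and $d \succ_h d'$ for some $d' \in \mu^{F}_h$. But then $(d,h)$ is also a blocking pair of $\mu^{F}$ in the original problem, and the second clause of weak stability yields $d' \succ_h d$ for every $d' \in \mu^{F}_h$, a contradiction; individual rationality transfers directly from weak stability. The delicate part of the proof, and the step I expect to be the main obstacle, is the pair consisting of the well-definedness of Market B and step (b). I must first check that $C_r$ satisfies the substitutability and law-of-aggregate-demand conditions needed for the generalized DA algorithm to deliver a doctor-optimal stable matching -- this is exactly where rationalizing the FDA rule as a choice function earns its keep -- and then show that $\mu^{A}$ is B-stable, i.e.\ that no region $r$ can, through its choice function $C_r$, profitably admit some doctor $d$ who strictly prefers a hospital in $H_r$ to $\mu^{A}_d$. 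For the latter I would invoke the rural-hospital property in Market A, which forces every A-stable matching to fill each hospital to $\bar{q}'_h$ and hence to load region $r$ to the same total $|\mu^{F}_r|$ as $\mu^{F}$; combined with the hospital-level A-stability of $\mu^{A}$, this should exhibit the region's assignment as a fixed point of $C_r$ and rule out any region-level block. Reconciling the hospital-by-hospital stability of $\mu^{A}$ with the region-level choice function $C_r$ is the technical crux, and the precise definition of $C_r$ and its substitutes structure will have to be handled with care.
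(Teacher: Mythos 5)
Your reduction of the theorem is exactly the paper's: the paper proves this statement in one line by setting $\bar{q}'_h = |\mu^F_h|$ and invoking its Theorem~\ref{T:main} together with the known properties of the FDA outcome from Kamada and Kojima (2015), so the substantive content of your proposal is the blind reconstruction of that equivalence. Your two-market scheme --- cast both algorithms as generalized DA in a contracts framework, show each outcome is stable in the other market, and let doctor-optimality on both sides force equality --- is precisely the architecture of the paper's proof of Theorem~\ref{T:main}, including the use of the rural hospital theorem to get $\xi(\mu^A) = \xi(\mu^F)$ (the paper's Equation~(\ref{eqa:2})). Your step (a) is correct and in fact more elementary than the paper's Step~1, which argues through the rationalized hospital-side choice function rather than directly through the weak stability of $\mu^F$; your displacement-type argument is a clean substitute. (You also need not re-derive substitutability and the law of aggregate demand for the aggregated hospital-side choice function, nor the identification of the FDA algorithm with the cumulative offer process --- the paper imports both from Kamada and Kojima (2018).)

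The genuine gap is step (b), which you assert rather than prove, and the sketch you give would not close it. Hospital-level A-stability of $\mu^A$ plus equality of regional totals only rules out region-level blocks that hold the within-region distribution fixed; but the region's choice function $C_r$, when offered $\mu^A_r$ together with a blocking doctor $d$ at hospital $h$, may admit $d$ while dropping a doctor at a \emph{different} hospital $h'$ in the same region --- a within-region reshuffle of the cap that keeps the regional total at $q_r$ --- and nothing in your argument shows $C_r$ does not strictly prefer such a reshuffled distribution. The paper closes exactly this hole by rationalizing the hospital side as $f^F(Y) = g^F(\xi(Y)) + f_R(\xi(Y)) + \epsilon f_H(Y)$ and, crucially, by constructing for each blocking contract $\tilde{x}$ at $X^D$ an auxiliary contract $x'$ with $X^F + x' \succ_D X^F$ and $\xi(X^F + x') = \xi(X^D + \tilde{x})$ (when $\tilde{x} \notin X^F$ take $x' = \tilde{x}$; when $\tilde{x} \in X^F$, Equations~(\ref{eqa:1}) and~(\ref{eqa:2}) supply some $x' \in X^D \setminus X^F$). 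Stability of $X^F$ in the original market against $x'$ then pins down $\xi(X^D) = \xi(X^F)$ as the $f_R$-maximal feasible distribution below $\xi(X^D + \tilde{x})$, and within that distribution the $f_H$-optimality of $X^D$ (inherited from its stability in the shadow market) yields $C_H(X^D + \tilde{x}) = X^D$. Without this transfer of distribution-level optimality from $\mu^F$ to the augmented pool at $\mu^A$, your claim that the region's assignment is ``a fixed point of $C_r$'' does not follow; supplying the auxiliary-contract construction is the missing idea you would need to complete the proof.
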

This result immediately implies the following.
\begin{corollary}
    There exist allocations of regional caps under which the DA algorithm produces a (constrained) efficient weakly stable matching.
\end{corollary}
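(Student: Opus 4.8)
The plan is to prove Theorem 1 and then obtain the Corollary as an immediate consequence, so I focus on the harder statement, Theorem 1. The hypothesis gives us an initial allocation $(\bar{q}_h)_{h \in H}$ under which the DA algorithm produces a weakly stable matching $\mu$ that is feasible but not (constrained) efficient. The goal is to construct a new allocation $(\bar{q}'_h)_{h \in H}$ that still respects every regional cap (i.e.\ $\sum_{h \in H_r} \bar{q}'_h \leq q_r$ for each $r \in R$) and under which the DA algorithm yields a (constrained) efficient, weakly stable matching that Pareto dominates $\mu$ on the doctor side.

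The natural construction, suggested by the examples, is to run the FDA algorithm (under \emph{some} order) with the \emph{same} target capacities, obtain its output matching $\mu''$, and then define the new allocation by $\bar{q}'_h := |\mu''_h|$, i.e.\ let each hospital's adapted target capacity equal the number of doctors it actually receives under FDA. First I would verify that this $(\bar{q}'_h)_h$ is a legitimate allocation: since $\mu''$ is feasible, we have $\sum_{h \in H_r} |\mu''_h| = |\mu''_r| \leq q_r$ for every region $r$, so the regional caps are respected; and $|\mu''_h| \leq q_h$ gives compatibility with physical capacities. Next I would invoke the known property (stated in Section 2.3) that the FDA algorithm returns a (constrained) efficient matching, and that it weakly improves doctor welfare relative to any weakly stable matching produced by the existing mechanism — so $\mu''_d \succeq_d \mu_d$ for all $d \in D$, which delivers the ``weakly better off'' clause once I show the DA output under $\bar{q}'$ coincides with $\mu''$.

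The crux of the argument — and the step I expect to be the main obstacle — is showing that running the ordinary DA algorithm with capacities $\bar{q}'_h = |\mu''_h|$ reproduces exactly $\mu''$. This is the equivalence foreshadowed in Example 3 and in the introduction. The strategy I would use is the matching-with-contracts reformulation: embed the problem into the Hatfield--Milgrom framework, construct the appropriate hospital-side choice function(s), and argue that $\mu''$ is stable in the standard two-sided market defined by the capacities $\bar{q}'_h$. Since the DA algorithm produces the doctor-optimal stable matching in that market, and since I can argue $\mu''$ is itself doctor-optimal there, optimality forces the DA output to equal $\mu''$. The delicate point is establishing stability of $\mu''$ in the $\bar{q}'$-market: I must show $\mu''$ is individually rational (inherited from FDA) and has no blocking pair, where the potential blocking must be ruled out using the fact that each hospital is now filled exactly to $|\mu''_h|$ and that FDA already processed applications in a way that exhausts acceptable improving opportunities consistent with the regional cap. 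Rationalizing the FDA ``filling'' rule as a choice function — so that the set $\mu''_h$ is precisely what that choice function selects from the relevant pool — is what makes the no-blocking verification clean, and I expect the careful bookkeeping here (particularly reconciling the region-level cap binding in the FDA step (ii) with the absence of hospital-level blocking in the DA market) to be where the real work lies.

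Finally, the Corollary follows trivially: if the DA algorithm under some starting allocation already yields a (constrained) efficient weakly stable matching, we are done; otherwise Theorem 1 applies and produces the desired allocation. In either case an efficient allocation exists.
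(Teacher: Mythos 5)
Your overall route is the paper's route: define the adapted capacities $\tilde{q}_h = |\mu^F_h|$ from an FDA outcome, check feasibility against the regional caps, invoke FDA's (constrained) efficiency and doctor-side improvement, and reduce everything to the equivalence claim that the DA algorithm under these capacities reproduces the FDA matching (the paper's Theorem 2), with the Corollary then immediate. However, your sketch of the equivalence has a genuine gap at its pivot. You establish (plausibly) that $\mu^F$ is a stable matching in the shadow market defined by the capacities $(\tilde{q}_h)_h$, and then assert ``I can argue $\mu''$ is itself doctor-optimal there.'' That assertion is precisely what needs proof, and it does not follow from shadow-market stability: stability of $\mu^F$ in the shadow market, combined with doctor-optimality of the DA output $\mu^D$ in that market, yields only the one-sided comparison $\mu^D \succeq_D \mu^F$. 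The rural hospital theorem then gives equality of distributions, $\xi(\mu^D)=\xi(\mu^F)$, but not equality of matchings, and one can check that without equality the efficiency of $\mu^F$ does not transfer to $\mu^D$ (hospitals weakly prefer $\mu^F$ by hospital-pessimality of the doctor-optimal stable matching, so $\mu^D$ could in principle be Pareto-dominated through the hospital side). So even for the Corollary alone, the one-directional argument does not close.

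What the paper does, and what your proposal omits, is the symmetric second step: show that $X^D$ (the DA outcome, viewed as a set of contracts) is a \emph{stable allocation in the original market} $\mathcal{M}^F$, i.e., with respect to the hospital-side choice function $C_H$ encoding regional caps and the FDA order, and then invoke the doctor-optimality of the FDA outcome in $\mathcal{M}^F$ (which rests on Kamada and Kojima's 2018 result that the FDA algorithm coincides with the cumulative offer process for a $C_H$ satisfying substitutability and the law of aggregate demand). This yields the reverse comparison $X^F \succeq_D X^D$, and the two optimality inequalities together force $X^F = X^D$. Verifying stability of $X^D$ in $\mathcal{M}^F$ is the technically delicate part — the paper uses the rural hospital theorem to equate distributions and a rationalization $f^F(Y) = g^F(\xi(Y)) + f_R(\xi(Y)) + \epsilon f_H(Y)$ of $C_H$ to translate blocking contracts for $X^D$ into blocking contracts for $X^F$ with the same distribution — and nothing in your proposal plays this role. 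Your intuition that rationalizing the FDA filling rule as a choice function makes the no-blocking verification clean is on target, but you apply it only in the easy direction; the argument as proposed proves $\mu^D \succeq_D \mu^F$ and stops short of the equivalence that the Corollary needs.
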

We prove this existence result using the FDA algorithm. Specifically, we show that the outcome of the FDA algorithm yields a doctor distribution that defines an efficient allocation of the regional caps. Moreover, a stronger result holds: the DA algorithm under this allocation produces the same matching as the FDA algorithm. In short, our main result states that the FDA algorithm is equivalent to the DA algorithm with appropriately adapted capacities. Formally, 

\begin{theorem}\label{T:main}
      Let $\mu^F$ be the matching produced by the FDA algorithm. For each hospital $h$, define $\tilde{q}_{h} \equiv |\mu^F_{h}|$. Let $\mu^D$ denote the matching produced by the DA algorithm with adapted capacities $(\tilde{q}_h)_{h \in H}$. Then, we have $\mu^F = \mu^D$. 
\end{theorem}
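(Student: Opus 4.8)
The plan is to embed both algorithms into the matching-with-contracts framework of Hatfield and Milgrom (2005) and then run an optimality sandwich. Concretely, I would set up two markets sharing the same doctors and the same doctor preferences. In the first market $\Gamma^D$, each hospital $h$ is a standard agent whose responsive choice function has capacity $\tilde q_h = |\mu^F_h|$; the doctor-proposing DA is the (generalized) cumulative-offer process here, so $\mu^D$ is its doctor-optimal stable matching. In the second market $\Gamma^F$, following Kamada and Kojima (2015, 2018), I would aggregate each region $r$ into a single agent endowed with the choice function $C_r$ obtained by rationalizing the FDA keeping rule (fill the target capacities $\bar q_{h}$, then fill the regional cap $q_r$ by the round-robin order, always retaining the $\succ_h$-best doctors). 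Since $C_r$ is substitutable and satisfies the law of aggregate demand, the FDA outcome $\mu^F$ is exactly the cumulative-offer / doctor-optimal stable matching of $\Gamma^F$. The target is to show that $\mu^F$ is stable in $\Gamma^D$ and $\mu^D$ is stable in $\Gamma^F$; optimality in each market then forces $\mu^D_d \succeq_d \mu^F_d$ and $\mu^F_d \succeq_d \mu^D_d$ for every $d$, and strictness of preferences yields $\mu^D = \mu^F$.

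First I would prove that $\mu^F$ is stable in $\Gamma^D$. Feasibility and individual rationality are immediate because $|\mu^F_h| = \tilde q_h$ exactly and FDA only retains acceptable doctors. For no blocking pair, I would use the top-segment property of FDA: if $(d,h)$ blocks then $h \succ_d \mu^F_d$, so $d$ applied to $h$ during FDA and was ultimately rejected; since each hospital always keeps a $\succ_h$-top segment of its applicants and, by substitutability of $C_r$, its acceptance threshold only rises, every doctor in $\mu^F_h$ is $\succ_h d$, contradicting the blocking condition. Applying doctor-optimality of $\mu^D$ in $\Gamma^D$ then gives $\mu^D_d \succeq_d \mu^F_d$ for all $d$. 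A short counting step leverages this: every doctor matched under $\mu^F$ is matched (weakly better) under $\mu^D$, so $\sum_h |\mu^D_h| \ge \sum_h |\mu^F_h| = \sum_h \tilde q_h$, while feasibility gives $|\mu^D_h| \le \tilde q_h$ for each $h$; hence $|\mu^D_h| = |\mu^F_h|$ for every hospital and $|\mu^D_r| = |\mu^F_r|$ for every region.

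The substantive step is showing $\mu^D$ is stable in $\Gamma^F$. Individual rationality and feasibility for the region hold because $|\mu^D_h| \le \tilde q_h \le q_h$ and $|\mu^D_r| = |\mu^F_r| \le q_r$, whence $C_r(\mu^D_r) = \mu^D_r$. Suppose $(d,h)$ blocks, i.e., $h \succ_d \mu^D_d$ and $(d,h) \in C_r(\mu^D_r \cup \{(d,h)\})$. By the previous paragraph $h \succ_d \mu^F_d$, so $d$ was rejected by $h$ in FDA; this forces either $h$'s physical capacity to bind ($\tilde q_h = q_h$) or, via the round-robin step, the regional cap to bind, the latter persisting to termination by the law of aggregate demand, so $|\mu^F_r| = q_r$. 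By the counting step the same constraint is binding at $\mu^D$ ($|\mu^D_h| = q_h$, respectively $|\mu^D_r| = q_r$), and since $d$ was rejected in $\Gamma^D$ every incumbent in $\mu^D_h$ is $\succ_h d$. I would then show that $C_r$, applied to $\mu^D_r \cup \{(d,h)\}$, rejects $d$: because $d$ is the $\succ_h$-worst applicant at $h$, it is retained only if $h$ keeps all $\tilde q_h$ incumbents \emph{and} one extra seat, which either violates $h$'s physical capacity or pushes the region above $q_r$ (the incumbents, each top-priority at their own hospital, already exhaust the binding cap). This contradicts the blocking condition, so $\mu^D$ is stable in $\Gamma^F$, and doctor-optimality of $\mu^F$ there yields $\mu^F_d \succeq_d \mu^D_d$ for all $d$. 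Combining the two directions closes the argument.

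The main obstacle is exactly this last verification that $C_r$ re-rejects the blocking doctor. The difficulty is that $\Gamma^F$'s choice function lets a hospital hold up to its physical capacity $q_h > \tilde q_h$, so feasibility alone does not preclude a block; one must pin down which constraint binds and argue that the round-robin, which fills seats in $\succ_h$-priority order cyclically, always prefers the $q_r$ incumbents of $\mu^D_r$ to the dominated newcomer $d$. This is where the counting equality $|\mu^D_h| = |\mu^F_h|$ and the substitutability and law-of-aggregate-demand properties of $C_r$ do the real work; a naive capacity comparison would fail, since a count profile inconsistent with an FDA round-robin could otherwise appear to admit a block.
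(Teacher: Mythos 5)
Your architecture is the same as the paper's: embed both algorithms as doctor-optimal stable outcomes of two contracts markets, prove cross-stability in both directions, and let doctor-optimality squeeze $\mu^F = \mu^D$. Your Step 1 is correct and complete (your direct counting argument for $|\mu^D_h| = |\mu^F_h|$ is a fine elementary substitute for the paper's appeal to the rural hospital theorem). The gap is in Step 2, exactly at the point you yourself flag as the main obstacle, and the justification you offer there does not work. When the regional cap is the binding constraint, the claim that $C_r$ applied to $\mu^D_r \cup \{(d,h)\}$ rejects $(d,h)$ does not follow from ``$d$ is $\succ_h$-worst at $h$ and the incumbents exhaust the cap'': $C_r$ re-optimizes over the entire regional pool, and the round-robin may admit $d$ at $h$ while shedding an incumbent at a \emph{different} hospital. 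Concretely, take a region with hospitals $h_1, h_2$, round-robin order $h_2, h_1$, targets $\bar q_{h_1} = \bar q_{h_2} = 1$, $q_r = 3$, $q_{h_2} \geq 2$, incumbents $|\mu^D_{h_1}| = 2$, $|\mu^D_{h_2}| = 1$, and a newcomer $d$ at $h = h_2$: step (i) keeps one doctor at each hospital, and in step (ii) the last regional seat goes to $h_2$, i.e., to $d$, dropping an incumbent at $h_1$. So your parenthetical ``(the incumbents, each top-priority at their own hospital, already exhaust the binding cap)'' assumes what must be proved. What rules such a configuration out is a global consistency fact about the FDA distribution that your proposal never establishes; the local facts you do use (``$d$ was rejected at some step of FDA, hence some constraint binds at $\mu^F$, hence at $\mu^D$'') are strictly weaker than what is needed.

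The paper closes this hole with two ingredients you are missing. First, it uses the blocking contract itself against $X^F$: since Step 1 gives $\mu^D \succeq_D \mu^F$, the blocking contract $\tilde x = (d,h)$ satisfies $h \succ_d \mu^D_d \succeq_d \mu^F_d$, so $\tilde x \notin X^F$ and $\tilde x \in C_D(X^F + \tilde x)$, whence stability of $X^F$ in $\mathcal{M}^F$ yields $C_H(X^F + \tilde x) = X^F$ --- a statement about precisely the one-contract-augmented pool, not merely about some intermediate step of the FDA run. Second, it transfers this statement from the pool $X^F + \tilde x$ to the pool $X^D + \tilde x$ by rationalizing $C_H$ as $g^F(\xi(\cdot)) + f_R(\xi(\cdot)) + \epsilon f_H(\cdot)$: the regional component depends only on the \emph{distribution} $\xi$, and your counting step gives $\xi(X^F + \tilde x) = \xi(X^D + \tilde x)$, so $\xi(X^D)$ is the $f_R$-best feasible distribution weakly below $\xi(X^D + \tilde x)$; stability of $X^D$ in the shadow market then pins down the identities within that distribution, giving $C_H(X^D + \tilde x) = X^D$. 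A procedural version of the same fix is available to you: the count vector chosen by the FDA keeping rule (targets, then round-robin) depends only on the applicant count vector, so rejection of $(d,h)$ from $X^F + \tilde x$ implies rejection from $\mu^D_r \cup \{(d,h)\}$, which has the same counts. Without some such transfer step, your Case B cannot be closed, and the proof as written is incomplete.
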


The proof, presented in the next subsection, builds on the matching with contracts (\cite{hatfield2005matching}) framework and the optimality of the DA algorithm. For clarity, we define two markets; the \textit{original market} $\mathcal{M}^F$ in which we run the FDA algorithm, and the \textit{shadow market} $\mathcal{M}^D$ in which we run the DA algorithm. 

We begin by formulating both $\mathcal{M}^F$ and $\mathcal{M}^D$ within matching with contracts framework. In this setting, both the FDA algorithm and the DA algorithms can be interpreted as the generalized DA algorithm.

We then show that $\mu^F$ and $\mu^D$ are stable allocation in both markets $\mathcal{M}^F$ and $\mathcal{M}^D$. To facilitate analysis, we use the rationalization of choice functions.
By the doctor-side optimality of the generalized DA algorithm in both markets, we have: (i) $\mu^F \succeq_D \mu^D$ from $\mathcal{M}^F$, and (ii) $\mu^D \succeq_D \mu^F$ from $\mathcal{M}^D$. These imply $\mu^F = \mu^D$, completing the proof. Not only is the argument simple and transparent, it also extends naturally to more general settings, such as \textit{hierarchical} structure (\cite{kamada2018stability}).\footnote{See Section 4 for further details.}

\subsection{Proofs}

In this section, we present the formal proofs for the previous section. We begin with Theorem 2, as the (constrained) efficiency of the FDA algorithm immediately implies the remaining results.

\begin{proof}[Proof of Theorem \ref{T:main}]
    We begin by rephrasing our model using the matching with contracts (\cite{hatfield2005matching}) framework, a method first applied in this context by \cite{kamada2018stability}.

    \textbf{Matching with contracts framework. }
    There are two types of agents: doctors in $D$ and a single agent representing the \textbf{hospital side}. That is, instead of modeling each hospital separately, we treat the entire set of hospitals as one single agent. Thus, there are $|D| + 1$ agents in total. Let $\mathcal{X}^0 \equiv D \times H$ denote the set of all possible \textbf{contracts}, where a contract $x = (d, h) \in \mathcal{X}^0$ represents doctor $d$ being matched to hospital $h$. We assume that each doctor finds any set of more than one contract unacceptable, reflecting the fact that a doctor can be matched to at most one hospital. For each doctor $d$, her preference $\succ_d$ is defined over the set of contracts $(\{d\}\times H) \cup \{\phi\}$, preserving her preference in the original model: $(d, h) \succ_d (d, h')$ if and only if $h \succ_d h'$, and $(d, h) \succ_d \phi$ if and only if $h \succ_d \phi$. Let $C_d: 2^{\mathcal{X}^0} \to \mathcal{X}^0$ denote the choice function of doctor $d$, which selects her most preferred contract from $X \subseteq \mathcal{X}^0$. Define the aggregate choice function of the doctor side by $C_D(X) \equiv \bigcup_{d \in D} C_d(X)$ for all $X \subseteq \mathcal{X}^0$, i.e., $C_D$ simply collects the choices of all doctors.
    The hospital side has preferences represented by a choice rule $C_H$, which we specify later. Let $X^F, X^D \subseteq \mathcal{X}^0$ denote the set of contracts that correspond to the matchings $\mu^F$ and $\mu^D$, respectively. 
    For notational convenience, for any $X \subseteq \mathcal{X}^0$ and $x \in \mathcal{X}^0$, we write $X + x$ to mean the union $X \cup x$. Similarly, we write $X - x$ to mean the set difference $X \backslash x$.
    
    In this framework, the concept of stable allocation corresponds to the standard notion of stability in two-sided matching.
    
    \begin{definition}[Stable allocation (Hatfield and Milgrom 2005)]
        An allocation $X \subseteq \mathcal{X}^0$ is \textbf{stable }if $C_D(X) = C_H(X) = X$ and for any $x \in C_D(X + x) $, we have $x \notin C_H(X +x)$.
    \end{definition}
    
    When $C_H$ satisfies some certain conditions\footnote{Specifically, the following three conditions: A choice function $C$ satisfies the \textbf{law of aggregate demand} if for any set of contracts $X', X \subseteq \mathcal{X}^0$, 
    \begin{align*}
        X' \subseteq X \Rightarrow |C(X')| \leq |C(X)|.
    \end{align*}
    $C$ is \textbf{substitutable} if for all $x, x' \in X$ and $Y \subseteq X\subseteq\mathcal{X}^0$, 
    \begin{align*}
        x \notin C(X \cup \{x\}) \Rightarrow x \notin C(X \cup \{x, x'\}).
    \end{align*}
    $C$ satisfies \textbf{irrelevance of rejected contracts} if
    \begin{align*}
        x \notin C(X \cup \{x\}) \Rightarrow C(X) = C(X \cup \{x\}). 
    \end{align*}
    } then the generalized DA algorithm is known to produce a doctor-optimal stable allocation, which is the counterpart of the doctor-side optimal matching. 
    \begin{definition}[Doctor-optimality]
        A stable allocation $X\subseteq \mathcal{X}^0$ is \textbf{doctor-side optimal} if for any other stable allocation $X' \neq X$, we have $X \succeq_D X'$. 
    \end{definition}
    
    Kamada and Kojima (2018) showed in their more generalized model\footnote{The gerenralized setting has a regional structure called \textit{hierarchy}. See Section 4 for more details.} that the preferences of hospitals and those of regions (that can be constructed from the order of hospitals in each region) can be aggregated into a choice function of the hospital side $C_H$ which satisfies the law of aggregate demand and substitutability, and the (generalized) FDA algorithm\footnote{The generalized FDA algorithm is defined for the setting with hierarchy. When the hierarchical structure reduces to our standard regional constraint, the generalized FDA algorithm reduces to the FDA algorithm.} proceeds exactly the same way as the cumulative offer process in the associated matching with contracts model.\footnote{See Proposition 1 of Kamada and Kojima (2018).} Hence, they proved that $X^F$ is a doctor-optimal stable allocation in the original market $\mathcal{M}^F$. Furthermore, it is almost obvious that $X^D$ is a doctor-optimal stable allocation in the shadow market $\mathcal{M}^D$. 
    
    Next, we show that $X^F$ and $X^D$ are stable allocations in the shadow market $\mathcal{M}^D$ and the original market $\mathcal{M}^F$, respectively. (Notice that stability is a concept that is defined according to the markets.) This completes the proof, since from the optimality of the allocations in each market, we have $X^F \succeq_D X^D$ and $X^D \succeq_D X^F$, which together imply $X^F = X^D$.

\textbf{Rationalizing a choice function.}
     Let $\mathcal{X}\subseteq \mathcal{X}^0$ denote the set of all possible acceptable contracts. A choice function $C$ is said to be \textbf{rationalized} by a utility function $f : 2^{\mathcal{X}} \to \mathbb{R}$ if, for any $Y \subseteq \mathcal{X}$, 
     \begin{align*}
         f(C(Y)) > f(Y') \text{ for every } Y' \subseteq Y \text{ and }Y' \neq C(Y).
     \end{align*}
     
     We rationalize the choice function of the hospital side in two markets.
     For any set of contracts $Y\in \mathcal{X}$, define $\xi(Y)$ as the \textbf{distribution} of the associated matching. This is a vector with $|H|$ elements, where each element specifies the number of doctors in each hospital. For example, if we have only two hospitals $h_1, h_2$ and the set of contracts is $Y \equiv \{(d_1, h_1),(d_2, h_1), (d_3, h_2)\}$, then $\xi(Y) = (2, 1)$. Let $C^D_H$ denote the choice function of the hospital side in the shadow market $\mathcal{M}^D$ (and let $C_H$ denote the choice function of the hospital side in the original market $\mathcal{M}^F$ as before). Note that, the FDA algorithm first fixes some order of the hospitals in each region. This means that, the choice function of the hospital side in the original market has a nature in which, given the applications, first it determines its most preferable distribution and later specifies the identity of the doctors in each hospital according to the preference of each hospital. Thus, we can define two functions $f^F, f^D: 2^\mathcal{X} \to \mathbb{R}$ that rationalize the choice functions $C_H$ and $C^D_H$, respectively, using functions $f_H : 2^\mathcal{X} \to \mathbb{R}$, $f_R, g^F, g^D: \mathbb{Z}^{|H|} \to \mathbb{R}$ and a small positive number $\epsilon$:
    \begin{align*}
        &f^F(Y) = g^F(\xi(Y)) + f_R(\xi(Y)) + \epsilon f_H(Y),\\
        &f^D(Y) = g^D(\xi(Y)) + f_H(Y).
    \end{align*}
    
    Here, $g^F$ and $g^D$ governs the \textbf{feasibility}; $g^F(\xi(Y)) = 0$ if $\xi(Y)$ is feasible in $\mathcal{M}^F$ (and similarly, $g^D(\xi(Y)) = 0$ if $\xi(Y)$ is feasible in $\mathcal{M}^D$,) and $g^F(\xi(Y)) = - \infty$ if $\xi(Y)$ is infeasible in $\mathcal{M}^F$  (and $g^D(\xi(Y)) = - \infty$ if $\xi(Y)$ is infeasible in $\mathcal{M}^D$,) meaning that some of the regional caps or the hospital capacities are violated. $f_R$ governs the preference of the regions in the original market $\mathcal{M}^F$ and $f_H$ governs the preference of the hospitals. Notice that when we have two feasible sets of contracts $Y, Y' \in \mathcal{X}$ with $Y' \subseteq Y$, the definition of the FDA algorithm implies that $f_R(\xi(Y)) > f_R(\xi(Y'))$, that is, the hospital side wants to hold larger set of contracts, given all the contracts are acceptable. Also, note that although we define the choice function only for inputs of the subset of acceptable contracts, whenever we receive the set of contracts with some unacceptable contracts, we can exclude them in choosing the contracts. Now, we show $X^D \succeq_D X^F$ and $X^F \succeq_D X^D$.

\textbf{Step 1. }
    First, we show $X^D \succeq_D X^F$, that is, $X^F$ is a stable allocation in the shadow market $\mathcal{M}^D$. Take any contract $x \in \mathcal{X}\backslash X^F$ such that $x \in C_D(X^F + x)$. Since $X^F$ is a (doctor-side optimal) stable allocation in the original market $\mathcal{M}^F$, we have $C_H(X^F + x) = X^F$ by the definition of stability. This implies that (i) among any $X' \subseteq X^F + x$, $\xi(X^F)$ maximizes $f_R$ (since $\epsilon$ is a small enough positive number,) and (ii) among any $X' \subseteq X^F + x$ with the same distribution of $\xi(X') = \xi(X^F)$, $X^F$ maximizes $f_H$. The definition of the shadow market $\mathcal{M}^D$ implies that any $X' \subseteq X^F + x$ with distribution $\xi(X') > \xi(X^F)$ is not feasible while $\xi(X^F)$ is feasible. Hence, from (ii), we obtain
    \begin{align*}
    {\text{arg max}}_{X' \subseteq X^F + x} f^D(X') = X^F,   
    \end{align*}
    which implies $C^D_H(X^F + x) = X^F$ for any $x \in C_D(X^F + x)$. This means that $X^F$ is a stable allocation in the shadow market $\mathcal{M}^D$ as desired.
    The optimality of $X^D$ in the shadow market $\mathcal{M}^D$ implies 
    \begin{align}\label{eqa:1}
        X^D \succeq_D X^F.
    \end{align}
    By the rural hospital theorem, Equation (\ref{eqa:1}) implies
    \begin{align}\label{eqa:2}
        \xi(X^D) = \xi(X^F).
    \end{align}

\textbf{Step 2.}
    Next, we show $X^F \succeq_D X^D$, that is, $X^D$ is a stable allocation in the original market $\mathcal{M}^F$. Take any contract $\tilde{x} \in \mathcal{X}\backslash X^D$ such that $\tilde{x} \in C_D(X^D + \tilde{x})$. 

    \begin{itemize}
        \item[(i)] If $\tilde{x} \in X^F$, Equation (\ref{eqa:1}) and Equation (\ref{eqa:2}) imply that there exists some contract $x' \in X^D\backslash X^F$ such that $X^F + x' \succ_D X^F$. Since $x' \notin X^F$ and $\tilde{x} \notin X^D$, we have $\xi(X^F + x') = \xi(X^D + \tilde{x})$.
        \item[(ii)] If $\tilde{x} \notin X^F$, $\tilde{x} \in C_D(X^D + x)$ implies $X^F + \tilde{x} \succ_D X^F$. Since $\tilde{x} \notin X^F$ and $\tilde{x} \notin X^D$, we have $\xi(X^F + \tilde{x}) = \xi(X^D + \tilde{x})$.
    \end{itemize}

    For any $\tilde{x}$, (i) and (ii) assure the existence of some contract $x' \in \mathcal{X}$ such that $X^F + x' \succ_D X^F$ and $\xi(X^F + x') = \xi(X^D + \tilde{x})$.

    Note that since we assume $X^F + x' \succ_D X^F$, stability in the original market $\mathcal{M}^F$ requires $C_H(X^F + x') = X^F$, or 
    \begin{align*}
        {\text{arg max}}_{X' \subseteq X^F + x'} f^F(X') = {\text{arg max}}_{X' \subseteq X^F + x'} g^F(\xi(X')) + f_R(\xi(X')) + \epsilon f_H(X') = X^F,
    \end{align*}
    which means among any feasible distribution $\xi(X')$ in $\mathcal{M}^F$ such that $\xi(X') \leq \xi(X^F + x')$, the distribution $\xi(X^F)$ (or equivalently, $ \xi(X^D)$) maximizes $f_R$. (Recall that $\epsilon$ is a small enough positive number.) Since we have $\xi(X^F + x') = \xi(X^D + \tilde{x})$, 
        \begin{align}\label{eqa:3}
            {\text{arg max}}_{\xi(X') \leq \xi(X^D + \tilde{x})} g^F(\xi(X'))  + f_R(\xi(X')) = \xi(X^F) = \xi(X^D).
        \end{align}
        
    Furthermore, since we assume 
$\tilde{x} \in C_D(X^D + \tilde{x})$, stability in the shadow market $\mathcal{M}^D$ requires $C_H^D(X^D + \tilde{x})  = X^D$. This implies that for any $X' \subseteq X^D + \tilde{x}$ with the same distribution $\xi(X') = \xi(X^D)$, $X^D$ maximizes the preference of each hospital $f_H$. Mathematically, 
    \begin{align}\label{eqa:4}
        {\text{arg max}}_{X' \subseteq X^D + \tilde{x}\text{ and } \xi(X') = \xi(X^D)} f_H(X') = X^D.
    \end{align}
    From equations (\ref{eqa:3}) and (\ref{eqa:4}), we have 
    \begin{align*}
        {\text{arg max}}_{X' \subseteq X^D + \tilde{x}} f(X') = {\text{arg max}}_{X' \subseteq X^D + \tilde{x}} g^F(\xi(X')) + f_R(\xi(X')) + \epsilon f_H(X') = X^D,
    \end{align*}
    which means that $C_H(X^D + \tilde{x}) = X^D$ for any $\tilde{x}\in C_D(X^D + \tilde{x})$. Hence, $X^D$ is a stable allocation in the original market $\mathcal{M}^F$ as desired. The optimality of $X^F$ in the original market $\mathcal{M}^F$ implies 
    \begin{align*}
        X^F \succeq_D X^D.
    \end{align*}
From Step 1 and Step 2, we obtain $X^F = X^D$. 
\end{proof}

We proved the equivalence of the FDA algorithm and the DA algorithm with adapted hospital capacities. This result immediately implies the other two results.
\begin{proof}[Proof of Theorem 1]
Let the allocation of the regional caps as $(\tilde{q}_h)_{h\in H}$. Then, Theorem \ref{T:main} and the property of the FDA algorithm implies that the DA algorithm under this allocation produces an efficient matching that weakly Pareto improves the previous matching (for the doctors).
\end{proof}

\begin{proof}[Proof of Corollary 1]
Let the allocation of the regional caps as $(\tilde{q}_h)_{h\in H}$. Then, Theorem \ref{T:main} and the efficiency of the FDA algorithm implies that the DA algorithm produces a (constrained) efficient matching under this allocation.
\end{proof}

\section{Discussion}

This paper offers a new perspective on matching with regional constraints by framing it as an allocation problem of the regional caps and unraveling the connection to the existing literature. Our main result establishes an equivalence between the FDA algorithm and the DA algorithm with endogenously adapted capacities. This equivalence implies that the weak stability of \cite{kamada2017stability}, a fairness notion defined in the original market with regional constraints, actually coincides with the stability under the standard two-sided matching market defined in the shadow market, providing an additional justification for the weak stability. The optimality of the generalized DA algorithm plays a key role in our proof, which builds on the matching with contracts (\cite{hatfield2005matching}) framework, following \cite{kamada2018stability}. The rationalization of the choice function helps us in simplifying the argument.

Our proof extends naturally to more complex regional constraints. \cite{kamada2018stability} studied \textbf{hierarchical} regional constraints, where any two regions are either nested or disjoint. By fixing regional preferences -- analogous to ranking orders in the simple regional cap setting -- and assuming the choice function of the hospital side satisfies substitutability and the law of aggregate demand, they proved the existence of (constrained) efficient matchings and invented the generalized FDA algorithm. Our equivalence result applies directly in this setting: \textit{the generalized FDA algorithm is equivalent to the DA algorithm with endogenously adapted capacities.}

More broadly, our result extends to some environments with distributional constraints where hospital preferences and constraints can be aggregated into a single choice function of the hospital side satisfying substitutability and the law of aggregate demand. In particular, if a DA-based mechanism first determines the distribution and then assigns doctors accordingly, it is equivalent to the DA algorithm with appropriately adapted capacities.

\bibliographystyle{apalike}
\bibliography{hoge}

\begin{thebibliography}{}

\bibitem[Abdulkadiro{\u{g}}lu, 2005]{abdulkadirouglu2005college}
Abdulkadiro{\u{g}}lu, A. (2005).
\newblock College admissions with affirmative action.
\newblock {\em International Journal of Game Theory}, 33:535--549.

\bibitem[Abdulkadiro{\u{g}}lu and S{\"o}nmez, 2003]{abdulkadirouglu2003school}
Abdulkadiro{\u{g}}lu, A. and S{\"o}nmez, T. (2003).
\newblock School choice: A mechanism design approach.
\newblock {\em American economic review}, 93(3):729--747.

\bibitem[Akin, 2021]{akin2021matching}
Akin, S. (2021).
\newblock Matching with floor constraints.
\newblock {\em Theoretical Economics}, 16(3):911--942.

\bibitem[Ayg{\"u}n and Turhan, 2020]{aygun2020dynamic}
Ayg{\"u}n, O. and Turhan, B. (2020).
\newblock Dynamic reserves in matching markets.
\newblock {\em Journal of Economic Theory}, 188:105069.

\bibitem[Do{\u{g}}an and Yildiz, 2023]{dougan2023choice}
Do{\u{g}}an, B. and Yildiz, K. (2023).
\newblock Choice with affirmative action.
\newblock {\em Management Science}, 69(4):2284--2296.

\bibitem[Echenique and Yenmez, 2015]{echenique2015control}
Echenique, F. and Yenmez, M.~B. (2015).
\newblock How to control controlled school choice.
\newblock {\em American Economic Review}, 105(8):2679--2694.

\bibitem[Ehlers et~al., 2014]{ehlers2014school}
Ehlers, L., Hafalir, I.~E., Yenmez, M.~B., and Yildirim, M.~A. (2014).
\newblock School choice with controlled choice constraints: Hard bounds versus soft bounds.
\newblock {\em Journal of Economic theory}, 153:648--683.

\bibitem[Fragiadakis and Troyan, 2017]{fragiadakis2017improving}
Fragiadakis, D. and Troyan, P. (2017).
\newblock Improving matching under hard distributional constraints.
\newblock {\em Theoretical Economics}, 12(2):863--908.

\bibitem[Gale and Shapley, 1962]{gale1962college}
Gale, D. and Shapley, L.~S. (1962).
\newblock College admissions and the stability of marriage.
\newblock {\em The American mathematical monthly}, 69(1):9--15.

\bibitem[Goto et~al., 2014]{goto2014improving}
Goto, M., Iwasaki, A., Kawasaki, Y., Yasuda, Y., and Yokoo, M. (2014).
\newblock Improving fairness and efficiency in matching with distributional constraints: An alternative solution for the japanese medical residency match.

\bibitem[Hafalir et~al., 2022]{hafalir2022design}
Hafalir, I.~E., Kojima, F., Yenmez, M.~B., and Yokote, K. (2022).
\newblock Design on matroids: Diversity vs. meritocracy.
\newblock {\em arXiv preprint arXiv:2301.00237}.

\bibitem[Hafalir et~al., 2013]{hafalir2013effective}
Hafalir, I.~E., Yenmez, M.~B., and Yildirim, M.~A. (2013).
\newblock Effective affirmative action in school choice.
\newblock {\em Theoretical Economics}, 8(2):325--363.

\bibitem[Hatfield and Milgrom, 2005]{hatfield2005matching}
Hatfield, J.~W. and Milgrom, P.~R. (2005).
\newblock Matching with contracts.
\newblock {\em American Economic Review}, 95(4):913--935.

\bibitem[Imamura, 2020]{imamura2020meritocracy}
Imamura, K. (2020).
\newblock Meritocracy versus diversity.
\newblock {\em Unpublished manuscript}.

\bibitem[Kamada and Kojima, 2015]{kamada2015efficient}
Kamada, Y. and Kojima, F. (2015).
\newblock Efficient matching under distributional constraints: Theory and applications.
\newblock {\em American Economic Review}, 105(1):67--99.

\bibitem[Kamada and Kojima, 2017]{kamada2017stability}
Kamada, Y. and Kojima, F. (2017).
\newblock Stability concepts in matching under distributional constraints.
\newblock {\em Journal of Economic theory}, 168:107--142.

\bibitem[Kamada and Kojima, 2018]{kamada2018stability}
Kamada, Y. and Kojima, F. (2018).
\newblock Stability and strategy-proofness for matching with constraints: A necessary and sufficient condition.
\newblock {\em Theoretical Economics}, 13(2):761--793.

\bibitem[Kojima, 2012]{kojima2012school}
Kojima, F. (2012).
\newblock School choice: Impossibilities for affirmative action.
\newblock {\em Games and Economic Behavior}, 75(2):685--693.

\bibitem[Kojima et~al., 2018]{kojima2018designing}
Kojima, F., Tamura, A., and Yokoo, M. (2018).
\newblock Designing matching mechanisms under constraints: An approach from discrete convex analysis.
\newblock {\em Journal of Economic Theory}, 176:803--833.

\bibitem[Roth, 1984]{roth1984evolution}
Roth, A.~E. (1984).
\newblock The evolution of the labor market for medical interns and residents: a case study in game theory.
\newblock {\em Journal of political Economy}, 92(6):991--1016.

\bibitem[Roth, 1985]{roth1985college}
Roth, A.~E. (1985).
\newblock The college admissions problem is not equivalent to the marriage problem.
\newblock {\em Journal of economic Theory}, 36(2):277--288.

\bibitem[Roth and Peranson, 1999]{roth1999redesign}
Roth, A.~E. and Peranson, E. (1999).
\newblock The redesign of the matching market for american physicians: Some engineering aspects of economic design.
\newblock {\em American economic review}, 89(4):748--780.

\bibitem[S{\"o}nmez and Yenmez, 2022]{sonmez2022affirmative}
S{\"o}nmez, T. and Yenmez, M.~B. (2022).
\newblock Affirmative action in india via vertical, horizontal, and overlapping reservations.
\newblock {\em Econometrica}, 90(3):1143--1176.

\bibitem[Tomoeda, 2018]{tomoeda2018finding}
Tomoeda, K. (2018).
\newblock Finding a stable matching under type-specific minimum quotas.
\newblock {\em Journal of Economic Theory}, 176:81--117.

\end{thebibliography}

\end{document}